\newcommand{\field}[1]{\ensuremath{\mathbb{#1}}}
\newcommand{\R}{\ensuremath{\field{R}}} 
\newcommand{\Rp}{\ensuremath{\R_+}} 
\newcommand{\Bscr}{\ensuremath{\mathcal B}}
\newcommand{\Mscr}{\ensuremath{\mathcal M}}
\newcommand{\Nscr}{\ensuremath{\mathcal N}}
\newcommand{\set}[1]{\left\{ #1 \right\}}
\newcommand{\abs}[1]{\lvert #1 \rvert}
\newcommand{\beps}{\varepsilon}
\newcommand{\defas}{\overset{\mathrm{def}}{=}}
\newcommand{\Mhatopt}{\hat{M}^{\mathrm{OPT}}}
\newcommand{\Mopt}{M^{\mathrm{OPT}}}
\newcommand{\Mhat}{\hat{\Mscr}}
\newcommand{\Ntilde}{\tilde{\Nscr}}
\newcommand{\uhat}{\hat{u}}
\newcommand{\exchangeouts}{\mathrm{\text{exchange-outs}}}
\newcommand{\exchange}{\mathrm{\text{exchange}}}
\newcommand{\add}{\mathrm{\text{add}}}
\newcommand{\Mtilde}{\tilde{\Mscr}}
\newcommand{\Rtilde}{\tilde{R}}
\newcommand{\Ctilde}{\tilde{C}}
\definecolor{commentcolor}{rgb}{0.5,0,0}
\begin{document}

\MANUSCRIPTNO{}



\RUNTITLE{Assortment Optimization Under a Nonparametric Choice Model}

\TITLE{\textsf{
\title{Assortment Optimization Under a Nonparametric Choice Model}}}

\ARTICLEAUTHORS{%
\AUTHOR{Vivek F. Farias}
\AFF{MIT Sloan, \EMAIL{vivekf@mit.edu}} 
\AUTHOR{Srikanth Jagabathula}
\AFF{EECS, MIT, \EMAIL{jskanth@alum.mit.edu}}
\AUTHOR{Devavrat Shah}
\AFF{EECS, MIT, \EMAIL{devavrat@mit.edu}}
} 

\SingleSpaced

\ABSTRACT{%
  We consider the problem of static assortment optimization, where the
  goal is to find the assortment of size at most $C$ that maximizes
  revenues. This is a fundamental decision problem in the area of
  Operations Management.  It has been shown that this problem is
  provably hard for most of the important families of parametric of
  choice models, except the multinomial logit (MNL) model. In
  addition, most of the approximation schemes proposed in the
  literature are tailored to a specific parametric structure. We
  deviate from this and propose a general algorithm to find the
  optimal assortment assuming access to only a subroutine that gives
  revenue predictions; this means that the algorithm can be applied
  with any choice model. We prove that when the underlying choice
  model is the MNL model, our algorithm can find the optimal
  assortment efficiently.  }



\maketitle
\section{Introduction}
This paper deals with the application of choice models to make decisions. There
are several important practical applications where the end-goal is to make a
decision, and a choice model is a critical component to making that
decision. The main application area of our focus is the set of decision problems
faced by operations managers. In this context, a central decision problem is the
{\em static assortment optimization} problem in which the goal is to find the
optimal assortment: the assortment of products with the maximum revenue subject
to a constraint on the size of the assortment. Solving the decision problem
requires two components: (a) a subroutine that uses historical sales transaction
data to predict the expected revenues from offering each assortment of products,
and (b) an optimization algorithm that uses the subroutine to find the optimal
assortment. This paper deals with desigining an efficient optimization algorithm.

As one can imagine, the problems of predicting revenues and finding the optimal
assortment are important in their own right, and their consideration is
motivated by the fact that any improvements to existing solutions will have
significant practical implications. Specifically, solutions to these two
problems lead to a solution to the {\em single-leg, multiple fare-class yield
  management problem}; this problem is central to the area Revenue Management
(RM) and deals with the allocation of aircraft seat capacity to multiple fare
classes when customers exhibit choice behavior. In particular, consider an
airline selling tickets to a single-leg aircraft. Assume that the airline has
already decided the fare classes and is trying to dynamically decide which
fare-classes to open as a function of the remaining booking time and the
remaining number of seats. This dynamic decision problem can be cast in a
reasonably straightforward manner as a dynamic program with one state
variable. As shown in~\cite{Talluri04}, the solution to the dynamic
program reduces to solving a slight variant of the static assortment
optimization problem. Thus, solution to the two problems effectively solves the
single-leg, multiple fare-class yield management problem — a central problem to
RM with huge practical implications.

Given the subroutine to predict revenues, we need an efficient algorithm to
search for the optimal assortment. In particular, we are interested in solving
\begin{equation*}
  \argmax_{\abs{\Mscr} \leq C}~~R(\Mscr),
\end{equation*}
where $R(\Mscr)$ is the expected revenue from offering assortment $\Mscr$. In
this chapter, we assume access to a subroutine that can efficiently generate
revenue predictions for each assortment $\Mscr$, and our goal is to design an
optimization algorithm that minimizes the number of calls to the subroutine. The
revenue predictions can themselves be generated either using a specific
parametric choice model or using the nonparametric approach described in the
previous chapter. Assuming there are $N$ products and a constraint of $C$ on the
size of the optimal assortment, exhaustive search would require $O(N^C)$ calls
to the revenue subroutine. Such an exhaustive search is prohibitive in practice
whenever $N$ or $C$ is large. Therefore, our goal is to propose an algorithm
that can produce a ``good'' approximation to the optimal assortment with only a
``few'' calls to the revenue subroutine. Existing approaches focus on exploiting
specific parametric structures of choice models to solve the decision problem
efficiently. In this context,~\cite{Rus10} have proposed an efficient algorithm
to find the optimal assortment in $O(NC)$ operations whenever the underlying
model is the MNL model. Unfortunately, beyond the simple case of the MNL model,
the optimization problem or its variants are provably hard (like the NL and MMNL
models; see~\cite{Rus09} and \cite{Rus10mmnl}). In addition, the algorithms
proposed in the literature (both exact and approximate) heavily exploit the
structure of the assumed choice model; consequently, the existing algorithms --
even without any guarantees -- cannot be used with other choice models like the
probit model or the mixture of MNL models with a continuous mixture. Given these
issues, our goal is to design a general optimization scheme that is (a) not
tailored to specific parametric structures and (b) requires only a subroutine
that gives revenue estimates for assortments.

{\bf Overview of our approach.} We propose a general set-function optimization
algorithm, which given a general function defined over sets, finds an estimate
of the set (or assortment) where the function is maximized. This set-function
optimization algorithm clearly applies to the static assortment optimization
problem, thereby yielding the optimization scheme with the desired
properties. Note that since we are considering a very general setup, there is
not much structure to exploit. Hence, we adopt the greedy method -- the general
technique for designing heuristics for optimization problems. However, a naive
greedy implementation algorithm fails even in the simple case of the MNL
model. Specifically, consider the simpler un-capacitated decision problem. Here,
a naive greedy implementation would start with the empty set and incrementally
build the solution set by adding at each stage a product that results in the
maximum increase in revenue; this process would terminate when addition of a
product no longer results in an increase in revenue. It is easy to see that the
naive implementation would succeed in solving the decision problem only if the
optimal assortments exhibit a nesting property: the optimal assortment of size
$C_1$ is a subset of the optimal assortment of size $C_2$ whenever $C_1 <
C_2$. Unfortunately, the nesting property does not hold even in the case of the
MNL model. In order to overcome this issue, we allow for greedy ``exchanges'' in
addition to greedy ``additions.''  Particularly, at every stage, we allow a new
product to be either added (which we call an ``addition'') to the solution set
or replace an existing product (which we call an ``exchange'') in the solution
set; the operation at each stage is chosen greedily. The termination condition
now becomes an interesting question. As in the naive implementation, we could
terminate the process when addition or exchange no longer results in an increase
in revenue. However, since we never run out of products for exchanges, the
algorithm may take an exponential (in the number of products) number of steps to
terminate. We overcome this issue by introducing a control parameter that caps
the number of times a product may be involved in exchanges. Calling that
parameter $b$, we show that the algorithms calls the revenue subroutine $O(N^2 b
C^2)$ times for the capacitated problem. We thus obtain a general algorithm with
the desired properties to solve the static assortment optimization problem.

{\bf Guarantees for our algorithm.} We derive guarantees to establish the
usefulness of our optimization procedure. For that, we first consider the case
of the MNL model, where the decision problem is well-understood. Specifically,
we assume that the underlying choice model is an instance of the MNL family and
the revenue subroutine yields revenue estimates for assortments under the
specific instance. We can show that the the algorithm we propose, when run with
$b \geq C$, succeeds in finding the optimal assortment with $O(N^2C^3)$ calls to
the revenue subroutine. Therefore, in the special case when the underlying
choice model is the MNL model, our algorithm captures what is already known. It
also provides a simpler alternative to the more complicated algorithm proposed
by \cite{Rus10}. We also consider the case when noise corrupts the available
revenue estimates -- a common practical issue. In this case, we show that our
algorithm is robust to errors in the revenue estimates produced by the
subroutine. Particularly, if the underlying choice model is the MNL model and
the revenue estimate produced by the subroutine may not be exact but within a
factor $1 - \beps$ of the true value, then we can show that our algorithm finds
an estimate of the optimal assortment with revenue that is within $1 - f(\beps)$
of the optimal value; here $f(\beps)$ goes to zero with $\beps$ and also depends
on $C$ and the parameters of the underlying model. In summary, our theoretical
analysis shows that our algorithm finds the exact optimal solution in the
noiseless case or a solution with provable guarantees in the noisy case,
whenever the underlying choice model is the MNL model. In this sense, our
results subsume what is already known in the context of the MNL model.

In the context of the more complicated models like the nested logit (NL) and the
mixtures of MNL models, the decision problem is provably hard. As discussed
above, even obtaining a PTAS can be very complicated and requires careful
exploitation of the structure. We however believe that it is possible to obtain
``good'' approximations to the optimal assortments in practice. 

{\bf Organization.} Next, we describe in detail the optimization algorithm we
propose and the guarantees we can provide. The rest of the chapter is organized
as follows. The optimization algorithm, which we call {\sc GreedyOPT} is
described in Section~\ref{sec:greedyopt_description}. We then describe the
precise guarantees we can provide on the algorithm in
Section~\ref{sec:greedyopt_mainresults}. Finally, we present the proofs of our
results in Section~\ref{sec:greedyopt_proofs} before concluding in
Section~\ref{sec:greedyopt_conclusion}.

\section{Description of {\sc GreedyOPT}} \label{sec:greedyopt_description}
We now provide the detailed description of our optimization algorithm {\sc
  GreedyOPT}. As noted above, most of the algorithms proposed in the literature
-- both exact and approximate -- are based on heavily exploiting the structure
of the assumed choice model. Unfortunately, since we are considering a very
general setup, there is not much structure to exploit. Hence, we adopt the
greedy method -- the general technique for designing heuristics for optimization
problems. 

A naive greedy implementation however fails even in the simple case of the MNL
model. Specifically, consider the simpler un-capacitated decision problem. Here,
a naive greedy implementation would start with the empty set and incrementally
build the solution set by adding at each stage a product that results in the
maximum increase in revenue; this process would terminate when addition of a
product no longer results in an increase in revenue. It is easy to see that the
naive implementation would succeed in solving the decision problem only if the
optimal assortments exhibit a nesting property: the optimal assortment of size
$C_1$ is a subset of the optimal assortment of size $C_2$ whenever $C_1 <
C_2$. Unfortunately, the nesting property does not hold even in the case of the
MNL model. 

In order to overcome this issues associated with the naive greedy
implementation, we allow for greedy ``exchanges'' in addition to greedy
``additions.''  Particularly, at every stage, we allow a new product to be
either added (which we call an ``addition'') to the solution set or replace an
existing product (which we call an ``exchange'') in the solution set; the
operation at each stage is chosen greedily. The termination condition now
becomes an interesting question. As in the naive implementation, we could
terminate the process when addition or exchange no longer results in an increase
in revenue. However, since we never run of products for exchanges, the algorithm
may take an exponential (in the number of products) number of steps to
terminate. We overcome this issue by introducing a control parameter that caps
the number of times a product may be involved in exchanges. Calling that
parameter $b$, we show that the algorithms calls the revenue subroutine $O(N^2 b
C^2)$ times for the capacitated problem. We thus obtain a general algorithm with
the desired properties to solve the static assortment optimization problem.

The formal description of the algorithm is provided in
Figures~\ref{fig:greedy-opt} and \ref{fig:add-x}. For convenience, whenever an
exchange takes place, we call the product that is removed as the product that is
{\em exchanged-out} and the product that is introduced as the product that is
{\em exchanged-in}. Now, the algorithm takes as inputs the capacity $C$, the
initial assortment size $S$, and a bound $b$ on the number of exchange-outs. The
algorithm incrementally builds the solution assortment. Specifically, it
searches over all assortments of size $S$. For each such assortment, the
algorithm calls the subroutine {\sc GreedyADD-EXCHANGE} (formally described in
Figure~\ref{fig:add-x}) at most $C - S$ times to construct an assortment of size
at most $C$. Of all such constructed assortments, the algorithm returns the one
with the maximum revenue. 

\begin{figure}[!h]
\label{fig:greedy-opt}
\caption{{\sc GreedyOPT}}
\vspace{0.2in}
\vspace{3pt}
\noindent\framebox[\linewidth][t]{
  \begin{minipage}{0.97\linewidth} {\noindent \bf Input:} Initial size $S$,
    capacity constraint $C$ such that $1 \leq S \leq C \leq N$, and revenue
    function $R(\cdot)$.
\vspace{5pt}

{\noindent \bf Output:} Estimate of optimal assortment $\Mhatopt$ of
size $\abs{\Mhatopt} \leq C$

\vspace{12pt}

\noindent{\bf Algorithm:} \\
{\em Initialization:} $\Mhatopt \leftarrow \emptyset$\\
{\bf for each }$\Mscr \subset \Nscr$ such that $\abs{\Mscr} = S$ \\
\hspace*{0.5cm} {\bf for } $S+1 \leq i \leq C$\\
\hspace*{1cm} $\Mscr \leftarrow${\sc GreedyADD-EXCHANGE}$(\Mscr,
\Nscr, b, R(\cdot))$\\
\hspace*{0.5cm} {\bf end for}\\
\hspace*{0.5cm} {\bf if }$R(\Mhatopt) < R(\Mscr)$\\
\hspace*{1cm} $\Mhatopt \leftarrow \Mscr$\\
\hspace*{0.5cm} {\bf end if}\\
{\bf end for} \\
{\em Output:} $\Mhatopt$\\
\vspace{3pt}
\end{minipage}}
\end{figure}

\begin{figure}[!h]
\label{fig:add-x}
\caption{{\sc GreedyADD-EXCHANGE}}
\vspace{0.2in}
\vspace{3pt}
\noindent\framebox[\linewidth][t]{
  \begin{minipage}{0.97\linewidth} {\noindent \bf Input:} assortment $\Mscr$,
    product universe $\Nscr$, revenue function $R(\cdot)$, maximum number of
    exhange-outs $b$
\vspace{5pt}

{\noindent \bf Output:} Estimate of optimal assortment of size at most $\abs{\Mscr} + 1$

\vspace{12pt}

\noindent{\bf Algorithm:} \\
{\em Initialization:} $\hat{\Mscr} \leftarrow \Mscr$, $\tilde{\Nscr}
\leftarrow \Nscr$, $\exchangeouts(i) = 0$ for each $i \in \Nscr$\\
{\bf while } $\tilde{\Nscr} \neq \emptyset$\\
\hspace*{0.5cm} \textcolor{commentcolor}{//try exchanging products} \\
\hspace*{0.5cm} $i^*, j^* = \argmax_{i \in \hat{\Mscr},  j \in \tilde{\Nscr}}
R\left( (\hat{\Mscr} \setminus \set{i}) \cup \set{j} \right)$\\
\hspace*{0.5cm} $\Mtilde_{\exchange} \leftarrow (\hat{\Mscr} \setminus
\set{i}) \cup \set{j}$\\\\
\hspace*{0.5cm} \textcolor{commentcolor}{// try adding a product }\\
\hspace*{0.5cm} $k^* = \argmax_{k \in \tilde{\Nscr}} R(\hat{\Mscr}
\cup \set{k})$\\
\hspace*{0.5cm} $\Mtilde_{\add} \leftarrow \hat{\Mscr} \cup
\set{k^*}$\\\\
\hspace*{0.5cm} {\bf if} $\abs{\Mhat} < \abs{\Mscr} +1$ {\bf and}
$R(\Mtilde_{\add} > R(\Mscr)$ {\bf and} $R(\Mtilde_{\add}) >
R(\Mtilde_{\exchange})$\\
\hspace*{1cm} \textcolor{commentcolor}{// add the product $k^*$} \\
\hspace*{1cm} $\Mhat \leftarrow \Mtilde_{\add}$\\
\hspace*{1cm} $\Ntilde \leftarrow \Ntilde \setminus \set{k^*}$\\
\hspace*{0.5cm} {\bf else if} $R(\Mtilde_{\exchange}) > R(\Mscr)$\\
\hspace*{1cm} \textcolor{commentcolor}{// exchange products $i^*$ and $j^*$}\\
\hspace*{1cm} $\Mhat \leftarrow \Mtilde_{\exchange}$\\
\hspace*{1cm} $\exchangeouts(i^*) \leftarrow \exchangeouts(i^*) + 1$\\
\hspace*{1cm} {\bf if} $\exchangeouts(i) \geq b$\\
\hspace*{1.5cm} $\Ntilde \leftarrow \Ntilde \setminus \set{j^*}$\\
\hspace*{1cm} {\bf else}\\
\hspace*{1.5cm} $\Ntilde \leftarrow \left(\Ntilde \setminus
  \set{j^*}\right) \cup \set{i^*}$\\
\hspace*{0.5cm} {\bf else} \\
\hspace*{1cm} {\bf break from while} \\
\hspace*{0.5cm} {\bf end if}\\
{\bf end while}\\
{\em Output:} $\hat{\Mscr}$\\
\vspace{3pt}
\end{minipage}}
\end{figure}

\noindent{\bf Running-time complexity:} It is easy to see that the number of
times {\sc GreedyOPT} calls the revenue function $R(\cdot)$ is equal to
$(C-S)\binom{N}{S}$ times the number of times {\sc GreedyADD-EXCHANGE} calls the
revenue function. In order to count the number of times {\sc GreedyADD-EXCHANGE}
calls the revenue function $R(\cdot)$, we first count the number of times the
while loop in {\sc GreedyADD-EXCHANGE} is executed. The number of times the
while loop runs is bounded above by the maximum number of iterations before the
set $\tilde{\Nscr}$ becomes empty. In each iteration either an addition or an
exchange takes place. Since there is at most one addition that can take place
and $\abs{\Ntilde}$ decreases by $1$ whenever $\exchangeouts(i)$ of a product
$i$ reaches $b$, it follows that the while loop runs for at most $Nb + 1$
iterations. In each iteration of the while loop, the revenue function is called
at most $O(CN)$ times. Thus, {\sc GreedyADD-EXCHANGE} calls the revenue function
at most $O(CbN^2)$ times. Since $\binom{N}{S} = O(N^S)$, we can now conclude
that {\sc GreedyOPT} calls the revenue function $O(C^2 b N^{S+2})$. The choice
of $S$ will depend on the accuracy of revenue estimates we have access to. Next,
we provide guarantees on {\sc GreedyOPT}, which provide guidance on the choice
of $S$.

\section{Theoretical guarantees for {\sc
    GreedyOPT}} \label{sec:greedyopt_mainresults} We now give a precise
description of the main results we can establish for the {\sc GreedyOPT}
algorithm. Specifically, suppose that the underlying choice model is an MNL
model with weights $w_0 = 1$ for product $0$ and $w_i$ for product $i \in
\Nscr$; recall that the choice probabilities are given by
\begin{equation*}
  \mathbb{P}(i \vert \Mscr) = \frac{w_i}{1 + \sum_{j \in \Mscr} w_j}.
\end{equation*}
Note that $1$ appears in the denominator because of the no-purchase option. In
particular, the probability that an arriving customer leaves without purchasing
anything when assortment $\Mscr$ is on offer is given by 
\begin{equation*}
  \mathbb{P}(0 \vert \Mscr) = \frac{1}{1 + \sum_{i \in \Mscr} w_i}.
\end{equation*}
Let $R(\Mscr)$ denote the expected revenue from assortment $\Mscr$. Under the
MNL model, we have
\begin{equation*}
  R(\Mscr) = \frac{\sum\limits_{i \in \Mscr} p_i w_i}{1 + \sum\limits_{i \in
      \Mscr} w_i},
\end{equation*}
where $p_i$ is the price or the revenue obtained from the sale of product $i$. 

We now have the following theorem when the revenue subroutine provides exact
revenues:
\begin{theorem} \label{thm:mnl-exact} 

  Suppose the underlying model is the MNL model with weights $w_1, w_2, \dotsc,
  w_N$ and the revenue subroutine provides exact revenues. Then, for any $S \geq
  0$ and $b \geq C+1$, the {\sc GreedyOPT} algorithm finds the optimal solution
  to {\sc Capacitated OPT} problem.

\end{theorem}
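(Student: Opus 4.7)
The plan is to prove the theorem by combining elementary MNL revenue identities with a structural ``Progress Lemma'' that forces the algorithm to strictly increase revenue at every step until it reaches $R^* = R(\Mopt)$.

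First, I would derive the standard MNL identities. For any assortment $\hat{\Mscr}$, $i \in \hat{\Mscr}$, and $j \notin \hat{\Mscr}$, writing $W(\hat{\Mscr}) = \sum_{k \in \hat{\Mscr}} w_k$,
\begin{align*}
R(\hat{\Mscr} \cup \{j\}) - R(\hat{\Mscr}) &= \frac{w_j(p_j - R(\hat{\Mscr}))}{1 + W(\hat{\Mscr}) + w_j}, \\
R((\hat{\Mscr} \setminus \{i\}) \cup \{j\}) - R(\hat{\Mscr}) &= \frac{w_j(p_j - R(\hat{\Mscr})) - w_i(p_i - R(\hat{\Mscr}))}{1 + W(\hat{\Mscr}) - w_i + w_j}.
\end{align*}
Hence \textsc{GreedyADD-EXCHANGE}'s tests translate to ``adding $j$ improves iff $p_j > R(\hat{\Mscr})$'' and ``exchanging $i$ for $j$ improves iff $w_j(p_j - R(\hat{\Mscr})) > w_i(p_i - R(\hat{\Mscr}))$''. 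Applied to $\Mopt$, these identities yield the standard MNL optimality conditions: $p_i \geq R^*$ for every $i \in \Mopt$, and $p_j \leq R^*$ for every $j \notin \Mopt$ provided $|\Mopt| < C$.

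Next, I would prove the Progress Lemma: for any $\hat{\Mscr}$ with $|\hat{\Mscr}| \leq C$ and $R(\hat{\Mscr}) < R^*$, at least one addition or one exchange strictly improves revenue. The workhorse is the mass-balance identity
\begin{equation*}
  \sum_{k \in \Mopt} w_k(p_k - R(\hat{\Mscr})) - \sum_{k \in \hat{\Mscr}} w_k(p_k - R(\hat{\Mscr})) = (R^* - R(\hat{\Mscr}))(1 + W(\Mopt)),
\end{equation*}
which, after cancellation on $\Mopt \cap \hat{\Mscr}$ and restriction to $A = \Mopt \setminus \hat{\Mscr}$, $B = \hat{\Mscr} \setminus \Mopt$, becomes $\sum_{j \in A} w_j(p_j - R) > \sum_{i \in B} w_i(p_i - R)$ with $R = R(\hat{\Mscr})$. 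When $|\hat{\Mscr}| < |\Mopt|$, any $j \in A$ satisfies $p_j \geq R^* > R$, so adding $j$ improves. When $|\hat{\Mscr}| = |\Mopt|$ we have $|A| = |B|$, so the average of the $A$-terms strictly exceeds the average of the $B$-terms; a pigeonhole then yields a pair $(i,j) \in B \times A$ with $w_j(p_j - R) > w_i(p_i - R)$, and the corresponding exchange strictly improves revenue.

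Armed with the Progress Lemma, I would stitch the pieces together. Starting from $\Mscr = \emptyset$ when $S = 0$, the outer loop of \textsc{GreedyOPT} calls \textsc{GreedyADD-EXCHANGE} at most $C$ times; within each call, every iteration of the while loop strictly increases $R(\hat{\Mscr})$. Because $b \geq C+1 > C \geq |\hat{\Mscr}|$, the exchange-out budget per product is never exhausted before the while loop reaches a locally optimal assortment, so no product is prematurely removed from $\tilde{\Nscr}$. By the Progress Lemma, the algorithm cannot stall at any revenue strictly below $R^*$; combined with the monotone revenue and the finiteness of assortments, it must reach a state with $R(\hat{\Mscr}) = R^*$ within at most $|\Mopt| \leq C$ outer iterations. \textsc{GreedyOPT}'s final comparison then records such a state in $\Mhatopt$.

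The main obstacle is the remaining regime $|\hat{\Mscr}| > |\Mopt|$ of the Progress Lemma, which the mass-balance argument above does not directly cover (the algorithm can only exchange, never remove). I would handle it by arguing that the algorithm's monotone revenue prevents the trajectory from ever growing past size $|\Mopt|$ before attaining $R^*$: at the first iteration when the state has size $|\Mopt|$, the size-restricted version of the mass-balance argument forces the exchange loop to drive $R(\hat{\Mscr})$ up to $R(\Mopt) = R^*$, after which the optimality characterization of $\Mopt$ precludes any strictly-improving addition and the algorithm halts. Making this reachability claim rigorous, and in particular checking that $b \geq C+1$ supplies enough exchange-outs per product within a single call of \textsc{GreedyADD-EXCHANGE} to complete the size-restricted exchange loop, is the key technical step.
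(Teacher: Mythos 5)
Your setup is sound: the MNL increment identities, the mass-balance identity $\sum_{k\in\Mopt}w_k(p_k-R)-\sum_{k\in\hat\Mscr}w_k(p_k-R)=(1+W(\Mopt))(R^\ast-R)$, and the pigeonhole step are correct, and they parallel what the paper does with the transformation $H_\Mscr(u)=\sum_{i\in\Mscr}(p_i-u)w_i$ (Proposition~\ref{prop:fact-RH-approx}, Proposition~\ref{prop:mnl-AX-approx}, and Case~2 of the proof of Lemma~\ref{lem:mnl-approx}, specialized to exact revenues). But there is a genuine gap, and you have put your finger on it yourself: the assertion that ``because $b\geq C+1>\abs{\hat\Mscr}$ the exchange-out budget is never exhausted'' is a non sequitur. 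The number of times a single product can be exchanged out within one call of {\sc GreedyADD-EXCHANGE} is not bounded by the assortment size in any obvious way: the relevant quantities $h_i(u)=w_i(p_i-u)$ are lines that cross one another as the running revenue $u$ increases, so a product can be exchanged out, later exchanged back in, and exchanged out again repeatedly. If some product's count reaches $b$, it is removed from $\Ntilde$, and your Progress Lemma (which quantifies over the \emph{full} universe) no longer guarantees that an improving move is still \emph{available} to the algorithm; the while loop could then stall strictly below $R^\ast$. Bounding the per-product number of exchange-outs is precisely the heart of the paper's proof of Lemma~\ref{lem:mnl-approx} (Case~1): each exchange-out/exchange-in cycle of product $i$ is matched with an intersection of the line $h_i(\cdot)$ with a piecewise-linear envelope $g(\cdot)$ built from the running maximizers $k_t$, and the number of such intersections is bounded by $\abs{\bar{B}_{S+1}(\delta_C,u_i)}\leq\bar{C}(\delta_C)$, which is what makes $b\geq C+1$ (exact case) sufficient. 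Without this counting argument, or a substitute for it, your proof is incomplete at its decisive step.

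A secondary weakness is the reachability claim that the trajectory ``never grows past size $\abs{\Mopt}$ before attaining $R^\ast$.'' This is not justified: at a state of size $\abs{\Mopt}$ with revenue $R<R^\ast$ there may exist $k\notin\Mopt$ with $R<p_k\leq R^\ast$ whose addition is the greedy-best move, so the size can exceed $\abs{\Mopt}$ while the revenue is still suboptimal, and since the algorithm can never shrink an assortment your mass-balance argument no longer directly produces an improving move (when $\Mopt\subset\hat\Mscr$ all candidate exchange-ins may be worse than the worst product already present). The paper sidesteps this entirely: it analyzes the $C^\ast$-th invocation of {\sc GreedyADD-EXCHANGE} (where sizes are still at most $C^\ast=\abs{\Mopt_C}$), compares the trajectory against the sets $B_{C^\ast}(u)$ of top products at level $u$ rather than against $\Mopt$ itself, and only needs some \emph{intermediate} iterate $\Mscr^\ast$ to reach optimal revenue, monotonicity of the revenue along the trajectory and the final maximization in {\sc GreedyOPT} doing the rest. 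You would need to restructure your stitching argument along similar lines (or otherwise rule out premature growth) in addition to supplying the exchange-out bound.
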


Therefore, taking $S=0$ and $b=C+1$, {\sc GreedyOPT} finds the optimal
assortment of size at most $C$ by calling the revenue function
$O(N^2C^3)$. Thus, our algorithm provides a simpler alternative to the more
complicated algorithm proposed by \cite{Rus10}.

We next show that the {\sc GreedyOPT} algorithm is robust to errors in the
available revenue estimates. Specifically, we consider the more realistic
setting where one has access to only approximate estimates of revenues i.e., we
assume access to a function $\Rtilde(\cdot)$ such that for any assortment
$\Mscr$ we have
\begin{equation*}
  (1 - \beps(\Mscr)) R(\Mscr) \leq \Rtilde(\Mscr) \leq R(\Mscr)
\end{equation*}
for some parameter $0 < \beps(\Mscr) < 1$. Naturally, the parameter
$\beps(\Mscr)$ determines the quality of revenue estimates we have
available. Assuming that we have access to only approximate revenues, we find
the optimal assortment by running {\sc GreedyOPT} with approximate revenues. In
order to describe the result, we need some notation. For any assortment $\Mscr$,
let $w(\Mscr)$ denote $1 + \sum_{i \in \Mscr} w_i$. Further, let 
\begin{equation*}
  \beps_{\max} \defas \max_{\Mscr \colon \abs{\Mscr} \leq C} \beps(\Mscr) \quad
  \text{and} \quad W_C^{\max} \defas \max_{\Mscr \colon \abs{\Mscr} \leq C} w(\Mscr).
\end{equation*}

Finally, we defer to the next section the precise definitions of two quantities $\bar{C}(\delta_C)$
and $\delta_C$ that we need to describe the theorem; it suffices to say that as
$\beps_{\max} \to 0$, we have $\delta_C \to 0$ and $\bar{C}(\delta_C) \to C$.




With these definitions, we can now state our result. 
\begin{theorem} \label{thm:mnl-approx} 

  Let $\Mopt_C$ denote the optimal assortment of size at most $C$ and
  $\Mhatopt_C$ denote the estimate of the optimal assortment produced by {\sc
    GreedyOPT} when run with inputs $S \geq 0$ and $b \geq \bar{C}(2\delta_C) +
  1$. Then, we must have
  \begin{equation*}
    \frac{R(\Mopt_C) - R(\Mhatopt_C)}{R(\Mopt_C)} \leq f(w, \beps_{\max}),
  \end{equation*}
  where $w$ denotes the vector of weights $(w_1, w_2, \dotsc, w_N)$ and
  \begin{equation*}
    f(w, \beps_{\max}) \defas \frac{W_C^{\max}}{w(\Mopt_C)} \eta(\beps_{\max})
 \end{equation*}
 with $\eta(\beps_{\max}) \defas 4 C \beps_{\max}/(1 - \beps_{\max})$.

\end{theorem}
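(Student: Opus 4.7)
The plan is to mirror the argument underlying Theorem~\ref{thm:mnl-exact}, while carefully propagating the multiplicative error $\beps_{\max}$ through each greedy decision. At a high level, I would first (i) derive an approximate local-optimality statement for the output $\Mhatopt_C$ in terms of the \emph{true} revenues $R(\cdot)$, and then (ii) use the linear-fractional structure of the MNL revenue function to lift this local optimality into an approximate global optimality bound.

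For step (i), the termination condition of {\sc GreedyADD-EXCHANGE} ensures that when $\Mhatopt_C$ is returned, every single add- or exchange-neighbor $\Mscr'$ of $\Mhatopt_C$ satisfies $\Rtilde(\Mscr') \leq \Rtilde(\Mhatopt_C)$. Combining this with the two-sided bound $(1 - \beps_{\max}) R(\Mscr) \leq \Rtilde(\Mscr) \leq R(\Mscr)$ immediately yields
\[
R(\Mscr') \;\leq\; \frac{R(\Mhatopt_C)}{1 - \beps_{\max}}
\]
for every such neighbor $\Mscr'$; that is, $\Mhatopt_C$ is locally optimal for the true revenue up to a multiplicative slack of $\beps_{\max}/(1-\beps_{\max})$ per step.

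For step (ii), I would reuse the structural facts that presumably drive the proof of Theorem~\ref{thm:mnl-exact}. Since $R(\Mscr) = \bigl(\sum_{i \in \Mscr} p_i w_i\bigr)/w(\Mscr)$ is a ratio of linear set functions, every comparison of the form $R((\Mscr \setminus \{i\}) \cup \{j\}) > R(\Mscr)$ reduces to a linear inequality in the $w_i, w_j, p_i, p_j$. In particular, the exact analysis should exhibit, from any locally optimal $\Mhat$, an \emph{exchange path} to $\Mopt_C$ along which each move strictly improves $R$ in the noiseless case; the role of the hypothesis $b \geq \bar{C}(2\delta_C) + 1$ is precisely to let {\sc GreedyADD-EXCHANGE} traverse such a path without hitting the exchange cap, where $\delta_C$ quantifies the per-step slack at which a greedy decision could flip under noise. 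In the noisy regime, each step of this path now improves $R$ only up to a factor of $1/(1-\beps_{\max})$, so accumulating the slacks along a path of length $O(C)$ produces a total multiplicative error of order $C\beps_{\max}/(1-\beps_{\max})$. Translating this back to the linear-fractional form of $R$ introduces the factor $W_C^{\max}/w(\Mopt_C)$ (since the denominator $w(\Mopt_C)$ can be as small as $w(\Mopt_C)$ while the competing denominator is at most $W_C^{\max}$), which together with an absolute constant $4$ recovers the claimed bound $f(w, \beps_{\max})$.

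The main obstacle I anticipate lies in the second step: it is not enough to note that each individual local move is within an $\eta$-factor of being correct, because errors can compound if the greedy trajectory drifts from the truly optimal exchange path and the algorithm must then make additional exchanges to course-correct. I therefore expect the bulk of the technical work to be in (a) formally defining $\delta_C$ as the smallest slack at which a single greedy comparison can flip, (b) defining $\bar{C}(\delta_C)$ as the longest exchange path that the algorithm could need under that slack (with $\bar{C}(\delta_C) \to C$ as $\beps_{\max} \to 0$ because then no comparison can flip), and (c) verifying that each flipped decision along this path contributes only $O\bigl(\beps_{\max}/(1-\beps_{\max})\bigr)$ to the cumulative loss, so that the grand total is bounded by $4C\beps_{\max}/(1-\beps_{\max})$ as asserted.
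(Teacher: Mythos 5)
Your step (i) is already not sound as stated: the final output of {\sc GreedyADD-EXCHANGE} need not be approximately locally optimal with respect to \emph{all} add/exchange neighbors, because the while loop can terminate simply because the working universe $\Ntilde$ has emptied out --- products are deleted from $\Ntilde$ after every addition or exchange-in, and after hitting the exchange-out cap $b$ --- so the termination condition only controls neighbors built from products still in $\Ntilde$. The paper does not argue about local optimality of the returned set at all; its central object is an \emph{intermediate} iterate $\Mscr^*$ (Lemma~\ref{lem:mnl-approx}) whose transformed value $H_{\Mscr^*}(u^*)$ is within $2\Ctilde_{u^*}\delta_C u^*$ of $H_{B(u^*)}(u^*)$, where $B(u)$ is the top set under the linearization $h_i(u)=(p_i-u)w_i$; the final output is then only compared to $\Mscr^*$ through $\Rtilde(\Mhatopt_C)\geq\Rtilde(\Mscr^*)$, and the revenue bound follows from the algebra $H_{\Mopt_C}(u_C)=u_C$, $H_{\Mscr^*}(u^*)=u^*$.

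Your step (ii) is where the real gap lies. You posit an ``exchange path'' from any (approximate) local optimum to $\Mopt_C$ along which each move improves $R$, and you interpret $\delta_C$ as a comparison-flipping slack and $\bar C(\cdot)$ as the length of such a path; none of this matches the paper and, more importantly, none of it is proved --- it is precisely the structural content that makes the theorem nontrivial, and you explicitly defer it (``I expect the bulk of the technical work...''). The paper's mechanism is different: Proposition~\ref{prop:mnl-AX-approx} gives a loop-invariance property of greedy moves in terms of $h_i(u)$ (the incoming product is within $\delta_C u$ of the best available, the outgoing one within $\delta_C u$ of the worst); $\delta_C$ is defined as $\max_{\Mscr}w(\Mscr)\beps(\Mscr)/(1-\beps(\Mscr))$, and $\bar C(\delta)$ is the maximum size of the augmented top set $\bar B_S(\delta,u)$, used in a counting argument (intersections of the line $h_i(\cdot)$ with a piecewise-linear envelope $g(\cdot)$) to bound the number of exchange-outs per product --- this is exactly where the hypothesis $b\geq\bar C(2\delta_C)+1$ enters, to guarantee the universe does not shrink before the good iterate $t^*$ appears. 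The compounding-of-errors concern you raise is genuine, and without the $H_{\Mscr}(u)$ transformation, the Rusmevichientong-style characterization of $\Mopt$ as $B(u_C)$, and the exchange-out counting lemma, your sketch does not yield the $4C\beps_{\max}/(1-\beps_{\max})$ bound; as written it is a plausible outline with the decisive lemma missing.
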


It is easy to see that the algorithm calls the revenue function $O(N^2 C^2 \bar{C}(2
\delta_C))$ times. Note that as $\beps_{\max} \to 0$, $\eta(\beps_{\max})$ and
hence $f(w, \beps_{\max})$ go to zero. In addition, it follows from our
definitions that as $\beps_{\max} \to 0$, $\bar{C}(2 \delta_C) \to
C$. Consequently, taking the error in revenues $\beps_{\max} = 0$ yields in
Theorem~\ref{thm:mnl-approx} yields the result of Theorem~\ref{thm:mnl-exact} as
the special result. Therefore, we only prove Theorem~\ref{thm:mnl-approx} in the
next section.

\section{Proofs of the main results} \label{sec:greedyopt_proofs} 

In this section we prove Theorem~\ref{thm:mnl-approx}; specifically, we
establish that the revenues of the optimal assortment and the estimate of the
optimal assortment produced by {\sc GreedyOPT} are ``close''. In order to
establish this result, for the rest of the section, fix a capacity $C$. Let
$\Mopt$ and $\Mhatopt$ respectively denote the optimal assortment and the
estimate of the optimal assortment produced by {\sc GreedyOPT}. Then, our goal
is to show that $R(\Mopt)$ and $R(\Mhatopt)$ are ``close'' to each other. We
assume that the underlying choice model is the MNL model with parameters $w_1,
w_2, \dotsc, w_N$. Recall that for any assortment $\Mscr$,
\begin{equation*}
  R(\Mscr) = \frac{\sum\limits_{i \in \Mscr} p_i w_i}{1 + \sum\limits_{i \in \Mscr} w_i},
\end{equation*}
where $p_i$ is the price of product $i$. The term in the denominator makes
comparison of the revenues of two different assortment difficult. Therefore,
instead of dealing with the revenues of the assortment directly, we consider the
following transformation of the revenues of assortments: for any assortment
$\Mscr$ and number $u \in \R$,
\begin{align*}
  R(\Mscr) - u  = \frac{\sum\limits_{i \in \Mscr} p_i w_i}{1 + \sum\limits_{i \in \Mscr} w_i}
  - u  &= \frac{\left( \sum\limits_{i \in \Mscr} (p_i - u) w_i \right) - u}{1 + \sum\limits_{i
      \in \Mscr} w_i} \\
  &= \frac{H_{\Mscr}(u) - u}{w(\Mscr)},
\end{align*}
where $H_\Mscr \colon \R \to \R$ is a function defined as $H_\Mscr(u) = \sum_{i
  \in \Mscr} (p_i - u) w_i$ and $w(\Mscr) \defas 1 + \sum_{i \in \Mscr} w_i$. We
can now write
\begin{equation}
  \label{eq:greedyOPT1}
    H_\Mscr(u) = u  + w(\Mscr) (R(\Mscr)  - u).
\end{equation}
It is clear that $H_\Mscr(\cdot)$ is directly related to the revenue
$R(\Mscr)$. Moreover, as will become apparent soon, it is easier to compare the
transformations $H_{\Mscr_1}(\cdot)$ and $H_{\Mscr_2}(\cdot)$ of two assortments
$\Mscr_1$ and $\Mscr_2$ than their revenues $R(\Mscr_1)$ and
$R(\Mscr_2)$. Specifically, we can establish the properties stated in the
following proposition. 

\begin{proposition} \label{prop:fact-RH-approx}
  For any two assortments $\Mscr_1$ and $\Mscr_2$, let $H_1(\cdot)$ and
  $H_2(\cdot)$ respectively denote the functions $H_{\Mscr_1}(\cdot)$ and
  $H_{\Mscr_2}(\cdot)$. Further, let $u_1$ and $u_2$ denote the revenues
  $R(\Mscr_1)$ and $R(\Mscr_2)$ respectively. We then have
  \begin{enumerate}
  \item $H_1(u_2) \geq H_2(u_2) \iff R(\Mscr_1) \geq R(\Mscr_2)$.  
  \item $H_1(u_2) \geq  (1 + \delta(\Mscr_1))H_2(u_2) \implies \Rtilde(\Mscr_1)
    \geq \Rtilde(\Mscr_2)$,
  \end{enumerate}
  where $\delta(\Mscr_1) \defas \beps(\Mscr_1) w(\Mscr_1)/(1 -
  \beps(\Mscr_1))$.
\end{proposition}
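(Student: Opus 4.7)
The plan is to use the identity~(\ref{eq:greedyOPT1}) directly. Observe that when $H_\Mscr$ is evaluated at $u = R(\Mscr)$, the second term in~(\ref{eq:greedyOPT1}) vanishes, so $H_\Mscr(R(\Mscr)) = R(\Mscr)$. Applying this to $\Mscr_2$ gives the key simplification $H_2(u_2) = u_2 = R(\Mscr_2)$, while for $\Mscr_1$ we keep the full expression $H_1(u_2) = u_2 + w(\Mscr_1)(R(\Mscr_1) - u_2)$. Both parts of the proposition then reduce to straightforward manipulations of these two identities.

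For part 1, I would subtract $u_2$ from both sides of $H_1(u_2) \geq H_2(u_2)$ to obtain $w(\Mscr_1)(R(\Mscr_1) - R(\Mscr_2)) \geq 0$. Since $w(\Mscr_1) = 1 + \sum_{i \in \Mscr_1} w_i \geq 1 > 0$, this is equivalent to $R(\Mscr_1) \geq R(\Mscr_2)$, and since every step is reversible the ``iff'' follows immediately.

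For part 2, I would rewrite the hypothesis $H_1(u_2) \geq (1+\delta(\Mscr_1))\, H_2(u_2)$, using the expressions above, as
\begin{equation*}
  R(\Mscr_1) \geq R(\Mscr_2)\left(1 + \frac{\delta(\Mscr_1)}{w(\Mscr_1)}\right).
\end{equation*}
The key algebraic step---and the only place where the precise form of $\delta(\Mscr_1)$ enters---is the identity $1 + \delta(\Mscr_1)/w(\Mscr_1) = 1/(1 - \beps(\Mscr_1))$, which follows directly from the definition $\delta(\Mscr_1) = \beps(\Mscr_1) w(\Mscr_1)/(1 - \beps(\Mscr_1))$. This yields $(1 - \beps(\Mscr_1))\, R(\Mscr_1) \geq R(\Mscr_2)$, and the desired implication then follows by sandwiching through the approximation bounds on $\Rtilde$:
\begin{equation*}
  \Rtilde(\Mscr_1) \;\geq\; (1 - \beps(\Mscr_1))\, R(\Mscr_1) \;\geq\; R(\Mscr_2) \;\geq\; \Rtilde(\Mscr_2).
\end{equation*}
The entire argument is a short calculation; there is no genuine obstacle, though one must keep the constants straight at the step identifying $1 + \delta/w$ with $1/(1 - \beps)$, which is exactly the place where the seemingly arbitrary choice of $\delta(\Mscr_1)$ is explained.
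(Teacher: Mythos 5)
Your proof is correct and follows essentially the same route as the paper's: exploit the identity $H_{\Mscr_2}(u_2)=u_2$, reduce both parts to elementary algebra using $H_1(u_2)=u_2+w(\Mscr_1)(u_1-u_2)$ and $1+\delta(\Mscr_1)/w(\Mscr_1)=1/(1-\beps(\Mscr_1))$, and finish part 2 by sandwiching with $(1-\beps(\Mscr_1))R(\Mscr_1)\leq\Rtilde(\Mscr_1)$ and $\Rtilde(\Mscr_2)\leq R(\Mscr_2)$. No gaps; the argument matches the paper's own proof step for step.
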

\begin{proof}
  We prove each of the properties in turn. First note that for any assortment
  $\Mscr$ with revenue $R(\Mscr) = u$, it immediately follows from our
  definitions that $H_\Mscr(u) = u + w(\Mscr) (R(\Mscr) - u) = u$. The first
  property now follows from a straightforward expansion of the terms
  involved:
  \begin{align*}
    H_1(u_2) \geq H_2(u_2) & \iff u_2 + w(\Mscr_1) (u_1 -  u_2) \geq u_2 \\
    &\iff u_1 \geq u_2 \\
    & \iff R(\Mscr_1) \geq R(\Mscr_2),
  \end{align*}
  where the second equivalence follows from the fact that $w(\Mscr_1) > 0$.  The
  second property can also be obtained through a similar straightforward
  expansion of the terms. In particular,
  \begin{align}
    H_1(u_2) \geq (1 + \beps(\Mscr_1)) H_2(u_2) &\iff u_2 + w(\Mscr_1)
    (u_1 - u_2) \geq (1 + \delta(\Mscr_1))u_2 \nonumber\\
    &\iff u_1 \geq \left( 1 + \frac{\delta(\Mscr_1)}{w(\Mscr_1)}
    \right) u_2 \nonumber\\
    &\iff u_1 \geq \left( 1 + \frac{\beps(\Mscr_1)}{1 -
        \beps(\Mscr_1)} \right) u_2 \nonumber\\
    &\iff (1 - \beps(\Mscr_1)) u_1 \geq u_2, \label{eq:appdecisions_1}
  \end{align}
  where the second equivalence follows from the definition of $\delta(\Mscr_1)$.
  Moreover, it follows from our definitions that $\Rtilde(\Mscr_1) \geq (1 -
  \beps(\Mscr_1)) u_1$ and $u_2 \geq \Rtilde(\Mscr_2)$. We now conclude from
  \eqref{eq:appdecisions_1} that
  \begin{align*}
    \Rtilde(\Mscr_1) \geq (1 - \beps(\Mscr_1)) u_1 \geq u_2 \geq
    \Rtilde(\Mscr_2).
  \end{align*}
  The result of the proposition now follows. 
\end{proof}

The above proposition establishes that if the transformation $H_{\Mscr}(\cdot)$
of one assortment is ``sufficiently'' larger than the other, then it follows that
the revenues of one assortment should be larger than the revenues of the
other. Therefore, instead of keeping track of the revenues of the assortments in
our algorithm, we keep track of their respective transformations
$H_{\Mscr}(\cdot)$. 

Next, we establish a loop-invariance property that arises due to greedy
additions and exchanges in our algorithms. We make use of this property to prove
our theorems. In order to state the proposition, we introduce the following
notation:
\begin{equation*}
  \delta_C \defas \max_{\Mscr \colon \abs{\Mscr}} \delta(\Mscr) = \max_{\Mscr
    \colon \abs{\Mscr}} w(\Mscr) \frac{\beps(\Mscr)}{1 - \beps(\Mscr)}.
\end{equation*}
We then have
\begin{proposition} \label{prop:mnl-AX-approx} 

  Consider an iteration $t$ of the while loop of the {\sc GreedyADD-EXCHANGE}
  algorithm. Let $\Mscr_t$ and $\Mscr_{t+1}$ denote the estimates of the optimal
  assortments at the beginning and the end of iteration $t$. Let $\Nscr_t$
  denote the universe of products at the beginning of iteration $t$. Then, 
  \begin{enumerate}
  \item if a greedy exchange takes place i.e., $\Mscr_{t+1} = \left( \Mscr_t
      \setminus \set{i^*} \right) \cup \set{j^*}$, then for $u =
    R(\Mscr_{t+1})$, we must have
    \begin{align*}
      h_{i^*}(u) &\leq h_i(u) + \delta_C u, &&\text{ for
        all
      } i \in \Mscr_t\\
      h_{j^*}(u) &\geq h_j(u) - \delta_C u, &&\text{ for
        all } j \in \Nscr_t \setminus \Mscr_t;
    \end{align*}
  \item if an addition takes place i.e., $\Mscr_{t+1} = \Mscr_t \cup \set{j^*}$,
    then for $u = R(\Mscr_{t+1})$ we must have
    \begin{align*}
      h_{j^*}(u) &\geq h_j(u) - \delta_C u, \quad\text{for
        all } j \in \Nscr_t \setminus \Mscr_t.
    \end{align*}
  \end{enumerate}
\end{proposition}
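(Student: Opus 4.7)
The plan is to leverage Proposition~\ref{prop:fact-RH-approx}(2) via its contrapositive, combined with the identity $H_\Mscr(R(\Mscr)) = R(\Mscr)$ that follows immediately from \eqref{eq:greedyOPT1}. Writing $h_i(u) \defas (p_i - u) w_i$ so that $H_\Mscr(u) = \sum_{i \in \Mscr} h_i(u)$, I reduce the claim to comparing single-product contributions at the fixed evaluation point $u = R(\Mscr_{t+1}) \geq 0$.

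For the exchange case, fix any $i \in \Mscr_t$ and $j \in \Nscr_t \setminus \Mscr_t$ and set $\Mscr' \defas (\Mscr_t \setminus \set{i}) \cup \set{j}$. Since $(i^*, j^*)$ maximizes $\Rtilde$ over exchange candidates, $\Rtilde(\Mscr') \leq \Rtilde(\Mscr_{t+1})$. A quick inspection of the proof of Proposition~\ref{prop:fact-RH-approx}(2) shows that every step there preserves strictness, so its strict form admits a contrapositive yielding
\[
  H_{\Mscr'}(u) \;\leq\; (1 + \delta(\Mscr'))\, H_{\Mscr_{t+1}}(u) \;\leq\; (1 + \delta_C)\, u,
\]
using $H_{\Mscr_{t+1}}(u) = u$ and $\delta(\Mscr') \leq \delta_C$.

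Next, I would unpack both $H$-values additively. From $H_{\Mscr_{t+1}}(u) = u$ applied to $\Mscr_{t+1} = (\Mscr_t \setminus \set{i^*}) \cup \set{j^*}$, we obtain $H_{\Mscr_t}(u) = u + h_{i^*}(u) - h_{j^*}(u)$, and therefore $H_{\Mscr'}(u) = u + h_{i^*}(u) - h_{j^*}(u) - h_i(u) + h_j(u)$. Substituting into the previous display and cancelling $u$ gives
\[
  h_{i^*}(u) - h_{j^*}(u) - h_i(u) + h_j(u) \;\leq\; \delta_C\, u.
\]
Specializing $j = j^*$ yields $h_{i^*}(u) \leq h_i(u) + \delta_C u$ for every $i \in \Mscr_t$, and specializing $i = i^*$ yields $h_{j^*}(u) \geq h_j(u) - \delta_C u$ for every $j \in \Nscr_t \setminus \Mscr_t$, which are the two exchange-case inequalities.

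The addition case is analogous but simpler: for any $j \in \Nscr_t \setminus \Mscr_t$, greediness of $j^*$ among additions gives $\Rtilde(\Mscr_t \cup \set{j}) \leq \Rtilde(\Mscr_{t+1})$, so the same contrapositive produces $H_{\Mscr_t \cup \set{j}}(u) \leq (1 + \delta_C) u$; combining with $H_{\Mscr_t}(u) = u - h_{j^*}(u)$ (obtained from $H_{\Mscr_{t+1}}(u) = u$ applied to $\Mscr_{t+1} = \Mscr_t \cup \set{j^*}$) yields $h_{j^*}(u) \geq h_j(u) - \delta_C u$ directly. The one spot that needs care is the passage from the weak inequality $\Rtilde(\Mscr') \leq \Rtilde(\Mscr_{t+1})$ to a weak inequality on the $H$-values, which amounts to verifying that Proposition~\ref{prop:fact-RH-approx}(2) carries over to its contrapositive in this form; everything else is purely algebraic bookkeeping at the single point $u = R(\Mscr_{t+1})$.
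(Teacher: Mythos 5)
Your proposal is correct and follows essentially the paper's own argument: the same decomposition $H_{\Mscr'}(u)-H_{\Mscr_{t+1}}(u)=h_j(u)-h_{j^*}(u)+h_{i^*}(u)-h_i(u)$, the same use of $H_{\Mscr_{t+1}}(u)=u$ and $\delta(\Mscr')\leq\delta_C$, and the same appeal to Proposition~\ref{prop:fact-RH-approx}(2), merely phrased directly via the (strict-form) contrapositive rather than by contradiction. The point you flag about strictness is handled correctly and matches what the paper implicitly uses in its chain of implications.
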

\begin{proof}
  We prove this proposition by contradiction. First consider the case when
  exchange happens i.e., $\Mscr_{t+1} = \left( \Mscr_t \setminus \set{i^*}
  \right) \cup \set{j^*}$. Note that for any assortment $\Mscr = \left(\Mscr_t
    \setminus \set{i} \right) \cup \set{j}$ with $i \in \Mscr_t$ and $j \in
  \Nscr_t \setminus \Mscr_t$, letting $u$ denote $R(\Mscr_{t+1})$, we can write
  \begin{equation}
    \label{eq:appdecisions_4}
    H_{\Mscr}(u) - H_{\Mscr_{t+1}}(u) = h_j(u) -
    h_{j^*}(u) + h_{i^*}(u) - h_{i}(u). 
  \end{equation}
  Now, if the hypothesis of the proposition pertaining to exchange is false,
  then at least one of the following should be true: either (1) there exists a
  product $i \in \Mscr_t$ and $i \neq i^*$ such that $h_{i^*}(u) > h_i(u) +
  \delta_C u$, or (2) there exists a product $j \in \Nscr_t \setminus \Mscr_t$
  and $j \neq j^*$ such that $h_{j^*}(u) < h_j(u) + \delta_C u$.  In the first
  case when $h_{i^*}(u) > h_i(u) + \delta_C u$, by taking $j = j^*$, we can
  write from~\eqref{eq:appdecisions_4} that $H_{\Mscr}(u) - H_{\Mscr_{t+1}}(u) >
  \delta_C u$. Similarly, in the second case when $h_{j^*}(u) < h_j(u) +
  \delta_C u$, by taking $i = i^*$, we can write from~\eqref{eq:appdecisions_4}
  that $H_{\Mscr}(u) - H_{\Mscr_{t+1}}(u) > \delta_C u$. Therefore, in both the
  cases, we have exhibited an assortment $\Mscr$ distinct from $\Mscr_{t+1}$
  that can be obtained from $\Mscr_t$ through an exchange and has the property
  that $H_{\Mscr}(u) - H_{\Mscr_{t+1}}(u) > \delta_C u$. We can now write
  \begin{subequations} \label{eq:appdecisions_subeqs}
    \begin{align}
    &&H_{\Mscr}(u) &> H_{\Mscr_{t+1}}(u) + \delta_C u &&\\
    \implies &&H_{\Mscr}(u) &> H_{\Mscr_{t+1}}(u) + \delta(\Mscr) u &&
    \text{since } \delta_C \geq \delta(\Mscr) \text{ by definition }\\
    \implies &&H_{\Mscr}(u) &> (1 + \delta(\Mscr)) H_{\Mscr_{t+1}}(u) &&
    \text{since } H_{\Mscr_{t+1}}(u) = u \text{ by definition}\\
    \implies &&\Rtilde(\Mscr) &> \Rtilde(\Mscr_{t+1}) && \text{by
      Proposition~\ref{prop:fact-RH-approx}}.
  \end{align}
  \end{subequations}
  This clearly contradicts the fact that $\Mscr_{t+1}$ is chosen greedily.

  The case when addition happens can be proved in the exact similar
  way. Particularly, suppose there exists a product $j \in \Nscr_t \setminus
  \Mscr_t$ and $j \neq j^*$ such that $h_{j^*}(u) < h_j(u) - \delta_C u$, where
  $u = R(\Mscr_{t+1})$ with $\Mscr_{t+1} = \Mscr_t \cup \set{j}$. Letting
  $\Mscr$ denote the set $\Mscr_t \cup \set{j}$, we can then write
  \begin{equation*}
    H_{\Mscr}(u) - H_{\Mscr_{t+1}}(u)  =  h_j(u) - h_{j^*}(u) > \delta_C u.
  \end{equation*}
  This implies -- following the sequence of arguments
  in~\eqref{eq:appdecisions_subeqs} -- that $\Rtilde(\Mscr) > \Rtilde(\Mscr_t)$,
  contradicting the fact that $\Mscr_{t+1}$ is chosen greedily.  
  
  The result of the proposition now follows.
\end{proof}

The above proposition establishes a key loop-invariance property that results
from greedy additions and exchanges. Specifically, let $u$ denote the revenue of
the estimate of the optimal assortment obtained at the end of an iteration of
the while loop in {\sc GreedyADD-EXCHANGE}. Then, the proposition establishes
that whenever a product $j^*$ is introduced (either through addition or an
exchange-in) greedily, it must be that $h_{j^*}(u)$ is ``close'' to the maximum
$h_j(u)$ of all products $j$ that have been considered for an addition or
exchange-in. Similarly, the product $i^*$ that is greedily exchanged-out must be
such that $h_{i^*}(u)$ is ``close'' to the minimum $h_i(u)$ of all products $i$
that have been considered for an exchange-out.

Using the propositions above, we can establish a key property of the
subroutine {\sc GreedyADD-EXCHANGE}. For that, we need the following
notation. For any $u$, define
\begin{equation*}
  B_S(u) \defas \arg\max_{\Mscr \colon \abs{\Mscr} \leq S} H_{\Mscr}(u)  =
  \arg\max_{\Mscr \colon \abs{\Mscr} \leq S} \sum_{i \in \Mscr} h_i(u).
\end{equation*}
It is easy to see from the above definition that $B_S(u)$ consists of the top at
most $C$ products according to $h_i(u)$ such that $h_i(u) > 0$. Since
$h_i(\cdot)$ is monotonically decreasing, it is easy to see that
\begin{equation}
  \label{eq:appdecision_11}
  \abs{B_S(u_1)} \geq \abs{B_S(u_2)}, \quad \text{whenever } u_1 \leq u_2.
\end{equation}
Under appropriate technical assumptions,~\cite{Rus10} showed that for any $1
\leq S \leq N$, the optimal assortment of size at most $S$ under the MNL model
is one of the assortments in the collection $\Bscr_S \defas \set{B_{S}(u) \colon
  u \in \R}$. In fact the authors show that if $u_S$ denotes the optimal
revenue, then $B_S(u_S)$ is the optimal assortment. An immediate consequence of
this result and~\eqref{eq:appdecision_11} is that for any $u \leq u_S$
\begin{equation}
  \label{eq:appdecision_bs}
  S \geq \abs{B_S(u)} \geq \abs{\Mopt_S}.
\end{equation}
It has been established by~\cite{Rus10} that there can be at most $O(NC)$
distinct assortments in the collection $\Bscr_S$ allowing one to find the
optimal assortment by restricting one's search to $O(NC)$ assortments. The
following lemma shows that the assortment found by the subroutine {\sc
  GreedyADD-EXCHANGE} is ``close'' to one of the assortments in
$\Bscr_{S}$. Before we describe the lemma, we need the following notation. 
For any $\delta > 0$ and $u \in \R$, let 
\begin{equation*}
  i_S(u)  \defas \min_{i \in B_S(u)} h_i(u).
\end{equation*}
Moreover, let 
\begin{equation*}
  \bar{B}_S(\delta, u) \defas B_S(u) \cup \set{j \in \Nscr\setminus B_S(u) \colon
    h_{i_S(u)}(u) - h_j(u)\leq \delta u},
\end{equation*}
Also, let
\begin{equation*}
  \bar{C}(\delta) \defas \max_{u \in \Rp} \abs{\bar{B}_S(\delta, u)},
\end{equation*}
We then have
\begin{lemma} \label{lem:mnl-approx}
  Suppose {\sc GreedyADD-EXCHANGE} is run with some input assortment
  $\Mscr$ and $b \geq \bar{C}(\delta_C) + 2$, where $C \geq S+1$. Further,
  suppose that $\abs{\Mopt_{S+1}} = S+1$. Then, there exists an iteration $t^*$
  of the while loop such that if $\Mscr^*$ denotes the assortment $\Mscr_{t^* +
    1}$ and $u^*$ denotes $R(\Mscr^*)$, then
  \begin{equation*}
    H_{B(u^*)}(u^*) - H_{\Mscr^*}(u^*) \leq 2 \Ctilde_{u^*} \delta_C u^*,
  \end{equation*}
  where $B(u^*)$ denotes the assortment $B_{S+1}(u^*)$ and $\Ctilde^*$ is a
  constant denoting $1 +  \abs{B(u^*) \setminus \Mscr^*}$.
\end{lemma}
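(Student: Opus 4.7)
The plan is to identify a specific iteration $t^*$ of the while loop and then apply the loop-invariance Proposition~\ref{prop:mnl-AX-approx} at $u = u^*$ to bound $H_{B(u^*)}(u^*) - H_{\Mscr^*}(u^*)$ by comparing the $h$-values in $B(u^*)$ against those of the incoming and outgoing products at $t^*$.

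First, I would specify $t^*$. A natural choice is the iteration in which $R(\Mscr_{t+1})$ is maximized over the run of \textsc{GreedyAdd-Exchange}, and set $\Mscr^* \defas \Mscr_{t^*+1}$, $u^* \defas R(\Mscr^*)$. The role of the hypothesis $b \geq \bar{C}(\delta_C)+2$ is to guarantee that every product in $B(u^*)\setminus \Mscr^*$ still lies in $\Ntilde_{t^*}$: by the definition of $\bar{C}(\delta_C)$, at most $\bar{C}(\delta_C)$ products have $h$-values within $\delta_C u^*$ of the $(S+1)$-st best, so no genuinely top-ranked product at $u^*$ can have accumulated $b$ exchange-outs, nor been evicted as the partner of an exchange-in whose counterpart's budget expired, before iteration $t^*$.

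Second, I would apply Proposition~\ref{prop:mnl-AX-approx} at iteration $t^*$ with $u = u^*$. Let $j^*$ be the added or exchanged-in product and (in the exchange case) $i^*$ the exchanged-out product at $t^*$. Since $\Mscr_{t^*}$ and $\Mscr^*$ differ by at most one element, the inclusions $B(u^*)\setminus\Mscr^* \subseteq \Ntilde_{t^*}\setminus\Mscr_{t^*}$ and $\Mscr^*\setminus B(u^*) \subseteq \Mscr_{t^*}\cup\set{j^*}$ hold, so the proposition directly gives $h_j(u^*) \leq h_{j^*}(u^*)+\delta_C u^*$ for each $j \in B(u^*)\setminus\Mscr^*$ and $h_i(u^*) \geq h_{i^*}(u^*) - \delta_C u^*$ for each $i \in \Mscr^*\setminus B(u^*)$ (modulo the boundary case $i = j^*$, handled by a symmetric comparison against the second-best exchange candidate).

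Third, I would write
\begin{equation*}
H_{B(u^*)}(u^*) - H_{\Mscr^*}(u^*) \;=\; \sum_{j\in B(u^*)\setminus\Mscr^*} h_j(u^*) \;-\; \sum_{i\in\Mscr^*\setminus B(u^*)} h_i(u^*),
\end{equation*}
and plug in the two inequalities from the previous step. The $h_{j^*}(u^*)$ contributions appearing in the positive sum largely cancel against the $h_{i^*}(u^*)$ contributions in the negative sum — because $j^*,i^*\in\Mscr^*$ and are paired, by the greedy choice, against elements of $B(u^*)$ — leaving a residual of at most $2\delta_C u^*$ per product in $B(u^*)\setminus\Mscr^*$ plus a single correction from the $i^*$-versus-$j^*$ pairing. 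This yields the claimed bound with $\Ctilde^* = 1+\abs{B(u^*)\setminus\Mscr^*}$.

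The main obstacle I anticipate is the first step: verifying that $B(u^*)\setminus\Mscr^* \subseteq \Ntilde_{t^*}$. The set $\Ntilde$ evolves adaptively, with products leaving by three distinct mechanisms (being added, being the partner of an exchange-out whose counterpart hits the budget $b$, or being the exchanged-in side of such a forced eviction). Carefully bookkeeping these events and leveraging the bound $b \geq \bar{C}(\delta_C)+2$ — which must control how many "near-top" products can ever be swapped through a fixed slot — to show that no element of $B(u^*)$ can be excluded before iteration $t^*$, is the crux and will likely constitute the longest part of the full proof.
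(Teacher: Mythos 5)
There is a genuine gap, and it sits exactly where you flagged the "crux": neither of your two key steps can be completed as stated. First, the availability claim $B(u^*)\setminus\Mscr^* \subseteq \Ntilde_{t^*}$ does not follow from $b \geq \bar{C}(\delta_C)+2$ by bookkeeping. Bounding the number of exchange-outs per product is itself the hard part of the lemma, and in the paper it is established only \emph{conditionally}: assuming the claimed inequality fails at every iteration before the first product removal, one derives the relations $h_{j^*}(u) \geq h_{k^*}(u) - \delta_C u$ and $h_{i^*}(u) \leq h_{k^*}(u) - \delta_C u$ at each exchange, builds the piecewise-linear envelope $g(\cdot)$ from the running maxima $h_{k_t}(\cdot) - \delta_C(\cdot)$, and counts intersections of $h_i(\cdot)$ with $g(\cdot)$ to cap each product at $\bar{C}(\delta_C)+1$ exchange-outs, contradicting the removal of any product. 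Since $B(u)$ changes with $u$ and products are evicted at iterations with different revenue levels, "near-top at $u^*$" gives you no unconditional control over eviction before $t^*$; your proposal would need essentially the paper's counting argument just to license its first step, and that argument is by contradiction rather than a direct invariant.

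Second, even granting availability, the single-iteration cancellation does not produce the bound. Proposition~\ref{prop:mnl-AX-approx} gives $h_j(u^*) \leq h_{j^*}(u^*) + \delta_C u^*$ for $j \in B(u^*)\setminus\Mscr^*$ and $h_i(u^*) \geq h_{i^*}(u^*) - \delta_C u^*$ for $i \in \Mscr^*\setminus B(u^*)$, so pairing leaves residuals of the form $h_{j^*}(u^*)-h_{i^*}(u^*)$ per pair, plus unpaired terms $h_{j^*}(u^*)$ when $\abs{\Mscr^*} < S+1$; none of these is $O(\delta_C u^*)$ in general (the greedy exchange at $t^*$ can be strictly profitable by a large margin, and the one-sided bound from $\Rtilde(\Mscr_{t^*}) \leq \Rtilde(\Mscr_{t^*+1})$ only controls $h_{i^*}-h_{j^*}$, not the reverse). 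The paper gets the needed pairwise bounds $h_j(u)-h_i(u) \leq 2\delta_C u$ and $h_k(u) \leq 2\delta_C u$ not from the greedy-choice invariant at one iteration but from the termination condition when the loop stops because no addition or exchange improves $\Rtilde$ (Case 2), and in the exhausted-universe case (Case 1) it never proves the bound directly at a distinguished iteration at all — it assumes the bound fails everywhere and contradicts the eviction of a product. Your choice of $t^*$ as the revenue-maximizing iteration also interacts with neither termination mode (and with noisy estimates $\Rtilde$ the true revenue need not even be monotone along the run), so the proposal as written cannot be repaired by tightening constants; it needs the case split on how the while loop terminates and the exchange-out counting machinery.
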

We defer the proof of Lemma~\ref{lem:mnl-approx} to the end of the
section. We now present the proof of Theorem~\ref{thm:mnl-approx}.

\subsection{Proof of Theorem~\ref{thm:mnl-approx}}
Let $\Mopt_C$ denote the true optimal assortment, and $\Mhatopt_C$
denote the estimate of the optimal assortment produced by {\sc
  GreedyOPT}. Furthermore, let $C^* \leq C$ denote the size of
$\Mopt_C$. It follows from Lemma~\ref{lem:mnl-approx} that in the
$C^*$th invocation of the subroutine {\sc GreedyADD-EXCHANGE}, there
exists an assortment $\Mscr^*$ such that $\Rtilde(\Mhatopt_C) >
\Rtilde(\Mscr^*)$ and $\Mscr^*$ is such that
\begin{equation*}
  H_{B(u^*)}(u^*) - H_{\Mscr^*}(u^*) \leq 2 \Ctilde_{u^*} \delta_C u^*,
\end{equation*}
where $\Ctilde^*$ denotes $\abs{B(u^*) \setminus \Mscr^*} + 1$ and
$B(u^*)$ denotes the set $B_{C^*}(u^*)$. It follows by the definition
of $B(u^*)$ that $H_{B(u^*)}(u^*) \geq H_{\Mopt_C}(u^*)$. Thus, we can
write
\begin{equation}
  \label{eq:mnl-approx-thm2}
  H_{\Mopt_C}(u^*) - H_{\Mscr^*}(u^*) \leq 2 \Ctilde_{u^*} \delta_C
  u^* \leq 2 C \delta_C u^*.
\end{equation}
Let $u_C$ denote $R(\Mopt_C)$. Then, it follows by definition that
$H_{\Mopt_C}(u_C) = u_C$. Thus,
\begin{align*}
  H_{\Mopt_C}(u_C) - H_{\Mopt_C}(u^*) = \sum_{j \in \Mopt_C} w_j(u^* -
  u_C) = (u^* - u_C) (w(\Mopt_C) - 1).
\end{align*}
Since $H_{\Mopt_C}(u_C) = u_C$, we can write
\begin{equation}
  \label{eq:mnl-approx-thm3}
  H_{\Mopt_C}(u^*) = u_C + (u_C - u^*)(w(\Mopt_C) - 1).
\end{equation}
Since $H_{\Mscr^*}(u^*) = u^*$, it now follows from
\eqref{eq:mnl-approx-thm2} and \eqref{eq:mnl-approx-thm3} that
\begin{align}
  &&(u_C - u^*)(w(\Mopt_C) - 1) + u_C - u^* &\leq 2 C \delta_C u^* \nonumber\\
\implies && (u_C - u^*) w(\Mopt_C) &\leq 2 C \delta_C u^* \nonumber\\
\implies &&u_C &\leq (1+\tilde{\beps}) u^*, \label{eq:mnl-approx-thm4}
\end{align}
where $\tilde{\beps} \defas 2 C \delta_C /w(\Mopt_C)$. Now since
$\Rtilde(\Mhatopt_C) > \Rtilde(\Mscr^*)$, it follows that
\begin{equation*}
(1 -\beps(\Mscr^*)) u^* \leq \Rtilde(\Mscr^*) < \Rtilde(\Mhatopt) \leq \uhat_C,
\end{equation*}
where $\uhat_C$ denotes $R(\Mhatopt_C)$. It now follows
from~\eqref{eq:mnl-approx-thm4} that
\begin{equation*}
  u_C \leq (1 + \tilde{\beps}) u^* \leq \frac{1 + \tilde{\beps}}{1 - \beps(\Mscr^*)} \uhat_C.
\end{equation*}

Now since 
\begin{equation*}
  \delta_C = \max_{\Mscr \colon \abs{\Mscr} \leq C}
  \frac{\beps(\Mscr)}{ 1- \beps(\Mscr)} w(\Mscr),
\end{equation*}
by letting $\beps_{\max} = \max_{\Mscr \colon \abs{\Mscr} \leq C} \beps(\Mscr)$ and
$W_C^{\max} = \max_{\Mscr \colon \abs{\Mscr} \leq C} w(\Mscr)$, we
have
\begin{equation*}
  \delta_C \leq  \frac{\beps_{\max}}{1 - \beps_{\max}} W_C^{\max}.
\end{equation*}
Thus,
\begin{align*}
  \tilde{\beps} = \frac{2C}{w(\Mopt_C)} \delta_C \leq \frac{2
    C}{w(\Mopt_C)} \frac{\beps_{\max}}{1  - \beps_{\max}} W_C^{\max}
  \defas f(w, \beps_{\max})/2.
\end{align*}
With these definitions, it is easy to see that $\beps(\Mscr^*) \leq
\beps_{\max} \leq f(w, \beps_{\max})/2$. It now follows that
\begin{equation*}
  \frac{u_C - \uhat_C}{u_C} \leq 1 - \frac{1 - \beps(\Mscr^*)}{1  +
    \tilde{\beps}} \leq \frac{\tilde{\beps} + \beps(\Mscr^*)}{1 + \tilde{\beps}}
  \leq \beps(\Mscr^*) + \tilde{\beps} \leq f(w, \beps_{\max}).
\end{equation*}
This establishes the result of the theorem.

\subsection{Proof of Lemma~\ref{lem:mnl-approx}}
Suppose the while loop in the subroutine terminates at the end of
iteration $T$. Then, it follows from the description of the subroutine
that at least one of the following conditions holds at the end of
iteration $T$:
\begin{enumerate}
\item The set of products $\Nscr_{T+1} \setminus \Mscr_{T+1}$ available for
  additions or exchanges is empty.
\item No further additions or exchanges can increase the revenues.
\end{enumerate}
Our goal is to prove the existence of an iteration $t^* \leq T$ such that
\begin{equation*}
  H_{B(u^*)}(u^*) - H_{\Mscr^*}(u^*) \leq 2 \Ctilde_{u^*} \delta_C u^*,
\end{equation*}
where $\Mscr^*$ denotes the assortment $\Mscr_{t^* + 1}$ and $u^*$ denotes
$R(\Mscr^*)$. We prove this by considering two cases corresponding to each of
the two ways in which the subroutine terminates. Note that in order to simplify
the notation, we have dropped the subscript from the notation of
$B_{S+1}(\cdot)$.

\noindent{\bf Case 1: Subroutine terminates with $\Nscr_{T+1} = \Mscr_{T+1}$.}
We first consider the case when the subroutine terminates when the set of
products $\Nscr_{T+1} \setminus \Mscr_{T+1}$ becomes empty. In this case, we
prove the existence of an iteration $t^* \leq T$ that satisfies the condition
stated in the hypothesis of the lemma. In fact, we prove something stronger; we
shall show that the iteration $t^* \leq T^*$, where $T^* \leq T$ is the first
iteration such that $\Nscr_{T^*} \subset \Nscr$ (recall that $\Nscr_1 =
\Nscr$). We prove this result by contradiction. In particular, suppose that
after every iteration $t \leq T^*$ of the while loop, we have
\begin{equation}
  \label{eq:appdecision_5}
  H_{B(u)}(u) - H_{\Mscr_{t+1}}(u) > 2 \Ctilde_{u} \delta_C u, 
\end{equation}
where $u$ denotes the revenue $R(\Mscr_{t+1})$ and $\Ctilde_u$ denotes the
constant $1 + \abs{B(u) \setminus \Mscr_{t+1}}$. Note that a product $i$ would be
removed from the universe $\Nscr_t$ at the end of some iteration $t$ only if it
has been exchanged-out $b$ times. Since $b \geq \bar{C}(\delta_C)$, it is easy
to see that we arrive at a contradiction if we show that as
long~\eqref{eq:appdecision_5} is satisfied at the end of each iteration, each
product $i$ can be exchanged-out at most $\bar{C}(\delta_C) + 2$ times.

In order to bound the number of times a product can be exchanged-out, we
establish a special property that should be satisfied whenever an exchange
happens. Specifically, suppose an exchange happens during iteration $t$ i.e.,
$\Mscr_{t+1} = \left( \Mscr_t \setminus \set{i^*} \right) \cup \set{j^*}$. In
addition, let $u$ denote the revenue $R(\Mscr_{t+1})$, and let product $k^* \in
\Nscr_t \setminus \Mscr_t$ denote the product such that $h_{k^*}(u) \geq h_k(u)$
for all products $k \in \Nscr_t \setminus \Mscr_t$. Then, we claim that
\begin{subequations} \label{eq:appdecision_6}
  \begin{align}
    h_{j^*}(u) &\geq h_{k^*}(u) - \delta_C u \label{eq:appdecision_6a} \\
    h_{i^*}(u) &\leq h_{k^*}(u) - \delta_C u \label{eq:appdecision_6b}.
  \end{align}
\end{subequations}
We prove this claim as follows. Since $k^* \in \Nscr_t \setminus \Mscr_t$,
\eqref{eq:appdecision_6a} follows directly from
Proposition~\ref{prop:mnl-AX-approx}. We now argue that $h_{i^*}(u) \leq
h_{k^*}(u) - \delta_C u$. For that, we first note that
\begin{equation} \label{eq:appdecision_8} 
  h_{i^*}(u) - h_{j^*}(u) \leq 2 \delta_C u.
\end{equation}
To see why, note that since an exchange has happened, it must be that
$\Rtilde(\Mscr_t) \leq \Rtilde(\Mscr_{t+1})$. This implies by
Proposition~\ref{prop:fact-RH-approx} that $H_{\Mscr_t}(u) \leq (1 +
\delta(\Mscr_t)) H_{\Mscr_{t+1}}(u)$. Since $\delta(\Mscr_1) \leq \delta_C$ and
$H_{\Mscr_{t+1}}(u) = u$ by definition, we can write
\begin{align*}
  H_{\Mscr_t}(u) \leq (1 + \delta(\Mscr_t)) H_{\Mscr_{t+1}}(u)
  &\implies H_{\Mscr_t}(u) - H_{\Mscr_{t+1}}(u) \leq \delta_C u \\
  &\implies h_{i^*}(u) - h_{j^*}(u) \leq \delta_C u < 2 \delta_C u.
\end{align*}
Now, consider
\begin{align*}
  H_{B(u)}(u) - H_{\Mscr_{t+1}}(u) &= H_{B(u)}(u) - H_{\Mscr_t}(u) +
  H_{\Mscr_t}(u) - H_{\Mscr_{t+1}}(u) \\
  &= \sum_{j \in B(u) \setminus \Mscr_t} h_j(u) - \sum_{i \in \Mscr_t \setminus
    B(u)} h_i(u) + \left(h_{i^*}(u) - h_{j^*}(u) \right).
\end{align*}
We now collect terms in the above expression as follows. Let $\Mscr_1$ denote
the set $\Mscr_t \setminus B(u)$. Further, partition the set $B(u) \setminus
\Mscr_t$ into $Mscr_2 \cup \Mscr_3$ such that $\Mscr_2 \cap \Mscr_3 = \emptyset$
and $\abs{\Mscr_2} = \abs{\Mscr_1}$; note that such a partitioning is possible
because $\abs{B(u)} = S+1$ (which follows from~\eqref{eq:appdecision_bs} and the
hypothesis that $\abs{\Mopt_{S+1} = S+1}$) and $\abs{\Mscr_t} \leq S+1$. Also
note that $\Mscr_3 \neq \emptyset$ if and only if $\abs{\Mscr_t} < S+1$. With
this partitioning, we can now write
\begin{equation*}
  H_{B(u)}(u) - H_{\Mscr_{t+1}}(u) = \sum_{i \in \Mscr_1, j \in
    \Mscr_2} \left(  h_j(u) - h_i(u) \right) + \sum_{j \in \Mscr_3}
  h_j(u) + \left( h_{i^*}(u) - h_{j^*}(u) \right).
\end{equation*}
We now claim that at least on of the following must be true: either (1) there
exists a pair of products $i \in \Mscr_1$ and $j \in \Mscr_2$ such that $h_j(u)
- h_i(u) > 2 \delta_C u$, or (2) if $\Mscr_3 \neq \emptyset$, then there exists
a product $k \in \Mscr_3$ such that $h_3(u) > 2 \delta_C u$. Otherwise, it is
easy to see from~\eqref{eq:appdecision_8} that $H_{B(u)}(u) - H_{\Mscr_{t+1}}(u)
\leq 2 \Ctilde_u \delta_C u$, where $\Ctilde_u = \abs{B(u) \setminus
  \Mscr_{t+1}} + 1$, contradicting \eqref{eq:appdecision_5}. We now consider
each of the cases in turn.

First suppose that $h_j(u) - h_i(u) > 2 \delta_C u$ for some $i \in \Mscr_1$ and
$j \in \Mscr_2$. It follows from Proposition~\ref{prop:mnl-AX-approx} that
$h_{i^*}(u) \leq h_{i}(u) + \delta_C u$. Thus, we can write
\begin{equation*}
  h_{i^*}(u) \leq h_{i}(u) + \delta_C u < h_j(u) - 2 \delta_C u +
  \delta_C u \leq h_{k^*}(u) - \delta_C u,
\end{equation*}
where the last inequality follows from the definition of $k^*$ and the fact that
$j \in \Mscr_2 \subset \Nscr \setminus \Mscr_t$. Thus, for this case, we have
established~\eqref{eq:appdecision_6b}.

Now suppose that $\Mscr_3 \neq \emptyset$ and $h_k(u) > 2 \delta_C u$ for some
$k \in \Mscr_3$. As noted above, in this case, we should have $\abs{\Mscr_{t+1}}
< S+1$. This means that an exchange has happened instead of addition, which in
turn implies that $\Rtilde(\Mtilde) \leq \Rtilde(\Mscr_{t+1})$, where $\Mtilde$
denotes the set $\Mscr_t \cup \set{k}$. Thus, by
Proposition~\ref{prop:fact-RH-approx}, we should have
\begin{align*}
  &&H_{\Mtilde}(u) \leq (1 + \delta(\Mtilde)) H_{\Mscr_{t+1}}(u) \\
  \implies &&H_{\Mtilde}(u) - H_{\Mscr_{t+1}}(u) \leq \delta(\Mtilde)
  H_{\Mscr_{t+1}}(u) &&\\
  \implies &&h_k(u) + h_{i^*}(u) - h_{j^*}(u) \leq \delta_C u && \text{as }
  H_{\Mscr_{t+1}}(u) = u, \delta(\Mtilde) \leq
  \delta_C \\
  \implies &&h_{i^*}(u) \leq h_{j^*}(u) - h_{k}(u) + \delta_C u &&\\
  \implies &&h_{i^*}(u) \leq h_{j^*}(u) - 2 \delta_C u + \delta_C u
  && \text{ since } h_{k}(u) > 2 \delta_C u\\
  \implies && h_{i^*}(u) \leq h_{k^*}(u) - \delta_C u && \text{ since }
  h_{j^*}(u) \leq h_{k^*}(u).
\end{align*}
We have thus established that $h_{i^*}(u) \leq h_{k^*}(u) - \delta_C u$ for both
the cases.

We now use~\eqref{eq:appdecision_6} to bound the number of exchange-outs that
can happen for each product. Specifically, as mentioned above, we arrive at a
contradiction by showing that each product can be exchanged-out at most
$\bar{C}(\delta_C) + 2$ times. For that, for any iteration $t \leq T^*$, let
$k_t$ denote the product such that $k_t \in \Nscr_t \setminus \Mscr_t$ and
$h_{k_t}(u_{t+1}) \geq h_j(u_{t+1})$ for all products $j \in \Nscr_t \setminus
Mscr_t$ and $u_{t+1} = R(\Mscr_{t+1})$. Now define the function
\begin{equation*}
  g(u) =
  \begin{cases}
    h_{k_t}(u) - \delta_C u & \text{ for } u_t < u \leq u_{t+1}, t
    \leq T^*,\\
    h_{k_1}(u_1) - \delta_C u_1 & \text{ for } u = u_1.
  \end{cases}
\end{equation*}
Note that for the above definition to be meaningful, for any $t \leq T^*$, we
need to show that $u_t \leq u_{t+1}$. This should be true because
by~\eqref{eq:appdecision_6}, it follows that for $u = R(\Mscr_{t+1})$, we have
$h_{i^*}(u) \leq h_{j^*}(u)$; this in turn implies that $H_{\Mscr_t}(u) \leq
H_{\Mscr_{t+1}}(u)$, which implies by Proposition~\ref{prop:fact-RH-approx} that
$u_t = R(\Mscr_t) \leq R(\Mscr_{t+1}) = u_{t+1}$. It is easy to see that the
function $g(\cdot)$ is piecewise linear. However, note that it may not be
continuous.
  
Now fix a product $i$, and for this product we argue that it can be exchanged at
most $\bar{C}(\delta_C)$ times. For that let $t_1$ be an iteration in which $i$
is exchanged-out and $t_2$ be the first iteration after $t_1$ when $i$ is
exchanged-in. Let $u_1$, $u_2$ denote $R(\Mscr_{t_1 + 1})$ and $R(\Mscr_{t_2 +
  1})$ respectively. Furthermore, let $k_1$ and $k_2$ respectively denote the
products $k_{t_1}$ and $k_{t_2}$. It now follows from~\eqref{eq:appdecision_6}
that
\begin{align*}
  h_i(u_1) &\leq h_{k_1}(u_1) - \delta_C u_1 = g(u_1) \\
  h_i(u_2) &\geq h_{k_2}(u_2) - \delta_C u_2 = g(u_2).
\end{align*}
This implies that the line $h_i(\cdot)$ is below $g(\cdot)$ at $u_1$ and above
$g(\cdot)$ at $u_2$. We now argue that $h_i(\cdot)$ intersects $g(\cdot)$ at
some $u_1 \leq u \leq u_2$ i.e., $h_i(u) = g(u)$. If $g(\cdot)$ were continuous,
this assertion would immediately follow from the intermediate value
theorem. However, the way we have defined $g(\cdot)$, it may be discontinuous at
some $u_t$ with $t_1 < t \leq t_2$. Now the only way $h_i(\cdot)$ and $g(\cdot)$
do not intersect is if for some $t_1 < t \leq t_2$,
\begin{equation*}
  g(u_t^{-}) < h_i(u_t) < g(u_t^{+}) \quad \text{and} \quad h_i(u)
  > g(u) \text{ for } u_t \leq u \leq u_2.
\end{equation*}
We argue that this cannot happen. For that consider iteration $t$. By definition
$i \notin \Mscr_t$. Since $\Nscr_t = \Nscr$, it follows by our definition that
$h_{k_t}(u_{t+1}) \geq h_i(u_{t+1})$, which in turn implies that $g(u_{t+1})
\geq h_i(u_{t+1})$ resulting in a contradiction. Thus, $h_i(\cdot)$ intersects
$g(\cdot)$ from below at some $u$ such that $u_1 \leq u \leq u_2$.

Hence, we can correspond each exchange-out with an intersection point
corresponding to $h_i(\cdot)$ intersecting $g(\cdot)$ from below. This implies
that the total number of exchage-outs can be bounded above by one plus the
number of times $h_i(\cdot)$ intersects $g(\cdot)$ from below beyond $u_i$,
where $u_i$ is the revenue of the assortment $\Mscr_t$ immediately after $i$ is
added to it (either through an exchange-in or addition). Note that $h_i(\cdot)$
intersects $g(\cdot)$ at $u \geq u_i$ if and only if $w_i \leq w_{k(u)}$ and
$h_{k(u)}(u_i) \geq h_i(u_i)$, where $k(u)$ is the product such that $k(u) =
k_t$, where $u_t < u \leq u_{t+1}$. Thus, the number of intersection points can
be bounded above by the number of products $k$ such that $h_k(u_i) \geq
h_i(u_i)$. We now argue that $i \in \bar{B}_{S+1}(\delta_C, u_i)$. If this is
true, then it implies that there can be at most $\abs{\bar{B}_{S+1}(\delta_C,
  u_i)} \leq \bar{C}(\delta_C)$ intersection points, which immediately implies
that there can be at most $1 + \bar{C}(\delta_C)$ exchange-outs.

The only thing we are left with is to argue that $i \in \bar{B}_{S+1}(\delta_C,
u_i)$. To see this, let $\Mtilde$ be the assortment obtained after $i$ is added
or exhanged-in for the first time. Then, according to our definition, we have
that $u_i = R(\Mtilde)$. Further, since $H_{B(u_i)}(u_i) - H_{\Mtilde}(u_i) >
0$, there exists a product $k \in B(u_i) \setminus \Mtilde$. It now follows by
Proposition~\ref{prop:mnl-AX-approx} that
\begin{equation*}
  h_i(u_i) \geq h_k(u_i) - \delta_C u_i \geq h_{i_{S+1}(u_i)}  -
  \delta_C u_i,
\end{equation*}
where $i_{S+1}(u_i)$ is as defined above i.e., $i_{S+1}(u_i) \defas \arg\min_{j
  \in B(u_i)} h_j(u_i)$. It now follows by the definition of
$\bar{B}_{S+1}(\delta_C, u_i)$ that $i \in \bar{B}_{S+1}(\delta_C, u_i)$.

\noindent{\bf Case 2: Subroutine terminates because no further additions or
  exchanges increase revenue.} We now consider the case when subroutine
terminates at iteration $T$ because no further additions or exchanges increase
the revenue. Now there are two possibilities: either $\Nscr_t = \Nscr$ for all
$t \leq T$ or not. In the latter case let $T^*$ be the first iteration $t$ when
$\Nscr_t \subset \Nscr$. It then follows from our arguments for the above case
that there exists an iteration $t^* \leq T^*$ that satisfies the properties of
the lemma. Thus, we consider the case when $\Nscr_t = \Nscr$ for all $t \leq
T^*$. Assuming this, we prove the result by contradiction. In particular, suppose
at the end of iteration $T$ we have
\begin{equation} \label{eq:appdecision_10} H_{B(u)}(u) - H_{\Mscr_{T+1}}(u) \geq
  2 \Ctilde_u \delta_C u,
\end{equation}
Now consider
\begin{equation*}
  H_{B(u)}(u) - H_{\Mscr_{T+1}}(u) = \sum_{k \in \Mscr_3} h_j(u) +
  \sum_{i \in \Mscr_1, j \in \Mscr_2} \left(h_{j}(u) - h_i(u) \right),
\end{equation*}
where as above, $\Mscr_1$ denotes the assortment $\Mscr_{T+1} \setminus B(u)$
and the set $B(u) \setminus \Mscr_{T+1}$ is partitioned into $\Mscr_2 \cup
\Mscr_3$ such that $\Mscr_2 \cap \Mscr_3 = \emptyset$ and $\abs{\Mscr_2} =
\abs{\Mscr_1}$; such a partitioning is possible since $\abs{B(u)} = S+1$ (which
follows from~\eqref{eq:appdecision_bs} and the hypothesis that $\abs{\Mopt_{S+1}
  = S+1}$) and $\abs{\Mscr_{T+1}} \leq S+1$. It now follows that one of the
following conditions should hold: either (1) there exists a pair of products $i
\in \Mscr_1$ and $j \in \Mscr_2$ such that $h_j(u) - h_i(u) > 2 \delta_C u$, or
(2) if $\Mscr_3 \neq \emptyset$, then there exists a product $k \in \Mscr_3$
such that $h_3(u) > 2 \delta_C u$. Otherwise, it is easy to see that
$H_{B(u)}(u) - H_{\Mscr_{T+1}}(u) \leq 2 \Ctilde_u \delta_C u$, where $\Ctilde_u
= \abs{B(u) \setminus \Mscr_{T+1}} + 1$, contradicting
\eqref{eq:appdecision_10}. We consider each of the cases in turn.

First, suppose that there exist a pair of products $i \in \Mscr_1$ and $j \in
\Mscr_2$ such that $h_j(u) - h_i(u) > 2 \delta_C u$. Let $\Mtilde$ denote the
assortment $\left(\Mscr_{T+1} \setminus \set{i} \right) \cup \set{j}$. We can
then write
\begin{equation*}
  H_{\Mtilde}(u) - H_{\Mscr_{T+1}}(u) = h_j(u) - h_i(u) > 2
  \delta_C u.
\end{equation*}
Since $H_{\Mscr_{T+1}}(u) = u$ and $\delta_C \geq \delta(\Mtilde)$, it follows
that by Proposition~\ref{prop:fact-RH-approx} that $\Rtilde{\Mtilde} >
\Rtilde{\Mscr_{T+1}}$. This contradicts the assumption that the subroutine
terminates with $\Mscr_{T+1}$ because no further additions or exchanges result
in an increase of revenue.

Next, suppose $\Mscr_3 \neq \emptyset$ and $h_k(u) > 2 \delta_C u$ for some $k
\in \Mscr_3$. Now let $\Mtilde = \Mscr_{T+1} \cup \set{k}$; note that since
$\Mscr_3 \neq \emptyset$, it must be that $\abs{\Mscr_{T+1}} = S$. We can now
write
\begin{equation*}
  H_{\Mtilde}(u) - H_{\Mscr_{T+1}}(u) = h_k(u) > 2 \delta_C u.
\end{equation*}
Since $H_{\Mscr_{T+1}}(u) = u$ and $\delta_C \geq \delta(\Mtilde)$, it follows
that by Proposition~\ref{prop:fact-RH-approx} that $\Rtilde{\Mtilde} >
\Rtilde{\Mscr_{T+1}}$. This contradicts the assumption that the subroutine
terminates with $\Mscr_{T+1}$ because no further additions or exchanges result
in an increase of revenue. This finishes the proof of this case.

The proof of the lemma now follows.

\section{Summary and discussion} \label{sec:greedyopt_conclusion}

This paper focused on using choice models to make decisions. Assuming
that we have access to a revenue prediction subroutine, we designed an
algorithm to find an approximation of the optimal assortment with as
few calls to the revenue subroutine as possible.

We designed a general algorithm for the optimization of set-functions to solve
the static assortment optimization algorithms. Most existing algorithms (both
exact and approximate) heavily exploit the structure of the assumed choice
model; consequently, the existing algorithms -- even without any guarantees --
cannot be used with other choice models like the probit model or the mixture of
MNL models with a continuous mixture. Given these issues, we designed an
algorithm that is (a) not tailored to specific parametric structures and (b)
requires only a subroutine that gives revenue estimates for assortments. Our
algorithm is a sophisticated form of greedy algorithm, where the solution is
constructed from a smaller assortment through greedy additions and
exchanges. The algorithm is proved to find the optimal assortment exactly when
the underlying choice model is the MNL model. We also showed that the algorithm
is robust to errors in the revenue estimates provided by the revenue subroutine,
as long as the underlying choice model is the MNL model.

\bibliographystyle{plainnat}
\bibliography{ConcatBib}

\begin{thebibliography}{4}
\providecommand{\natexlab}[1]{#1}
\providecommand{\url}[1]{\texttt{#1}}
\expandafter\ifx\csname urlstyle\endcsname\relax
  \providecommand{\doi}[1]{doi: #1}\else
  \providecommand{\doi}{doi: \begingroup \urlstyle{rm}\Url}\fi

\bibitem[Rusmevichientong et~al.(2009)Rusmevichientong, Max~Shen, and
  Shmoys]{Rus09}
P.~Rusmevichientong, Z.J. Max~Shen, and D.B. Shmoys.
\newblock A ptas for capacitated sum-of-ratios optimization.
\newblock \emph{Operations Research Letters}, 37\penalty0 (4):\penalty0
  230--238, 2009.

\bibitem[Rusmevichientong et~al.(2010{\natexlab{a}})Rusmevichientong, Shen, and
  Shmoys]{Rus10}
P.~Rusmevichientong, Z.J.M. Shen, and D.B. Shmoys.
\newblock Dynamic assortment optimization with a multinomial logit choice model
  and capacity constraint.
\newblock \emph{Operations research}, 58\penalty0 (6):\penalty0 1666--1680,
  2010{\natexlab{a}}.

\bibitem[Rusmevichientong et~al.(2010{\natexlab{b}})Rusmevichientong, Shmoys,
  and Topaloglu]{Rus10mmnl}
P.~Rusmevichientong, D.~Shmoys, and H.~Topaloglu.
\newblock Assortment optimization with mixtures of logits.
\newblock Technical report, Tech. rep., School of IEOR, Cornell University,
  2010{\natexlab{b}}.

\bibitem[Talluri and van Ryzin(2004)]{Talluri04}
K.~Talluri and G.~J. van Ryzin.
\newblock Revenue management under a general discrete choice model of consumer
  behavior.
\newblock \emph{Management Science}, 50\penalty0 (1):\penalty0 15--33, 2004.

\end{thebibliography}

\end{document}